\newcommand{\rev}[1]{#1}
\newlist{myitemize}{itemize}{3}
\setlist[myitemize,1]{label=1.,leftmargin=1em}
\setlist[myitemize,2]{label=$\rightarrow$,leftmargin=0.75em}
\setlist[myitemize,3]{label=$\diamond$}
\newcolumntype{C}[1]{>{\centering\arraybackslash}p{#1}}
\def\endthebibliography{%
  \def\@noitemerr{\@latex@warning{Empty `thebibliography' environment}}%
  \endlist
}
\theoremstyle{definition}
\newtheorem{theorem}{Theorem}
\newtheorem{corollary}{Corollary}
\newtheorem{example}{Example}
\newcommand{\PauliX}{\M{X}}
\newcommand{\PauliY}{\M{Y}}
\newcommand{\PauliZ}{\M{Z}}
\pgfplotsset{compat=1.17}
\begin{document}

\title{Bubble Clustering Decoder for \\ Quantum Topological Codes }



\author{Diego Forlivesi,~\IEEEmembership{Graduate~Student~Member,~IEEE,}
Lorenzo~Valentini,~\IEEEmembership{Member,~IEEE,}
        and~Marco~Chiani,~\IEEEmembership{Fellow,~IEEE}
\thanks{The authors are with the Department of Electrical, Electronic, and Information Engineering ``Guglielmo Marconi'' and CNIT/WiLab, University of Bologna, V.le Risorgimento 2, 40136 Bologna, Italy. E-mail: \{diego.forlivesi2, lorenzo.valentini13, marco.chiani\}@unibo.it. 
This work is supported in part by PNRR MUR project PE0000023-NQSTI (Italy) financed by the European Union – Next Generation EU.
}
}

\maketitle 
\markboth{}{Forlivesi, Valentini, Chiani: Bubble Clustering Decoder for Quantum Topological Codes}

\begin{acronym}
\small
\acro{AWGN}{additive white Gaussian noise}
\acro{BCH}{Bose–Chaudhuri–Hocquenghem}
\acro{BC}{bubble clustering}
\acro{CDF}{cumulative distribution function}
\acro{CRC}{cyclic redundancy code}
\acro{LDPC}{low-density parity-check}
\acro{LUT}{lookup table}
\acro{ML}{maximum likelihood}
\acro{MWPM}{minimum weight perfect matching}
\acro{QECC}{quantum error correcting code}
\acro{PDF}{probability density function}
\acro{PMF}{probability mass function}
\acro{MPS}{matrix product state}
\acro{WEP}{weight enumerator polynomial}
\acro{WE}{weight enumerator}
\acro{BD}{bounded distance}
\acro{QLDPC}{quantum low density parity check}
\acro{CSS}{Calderbank, Shor, and Steane}
\acro{MST}{minimum spanning tree}
\acro{PruST}{pruned spanning tree}
\acro{RFire}{Rapid-Fire}
\acro{UF}{union-find}
\acro{LEMON}{library for efficient modeling and optimization in networks}
\acro{STM}{spanning tree matching}
\acro{i.i.d.}{independent identically distributed}
\acro{QEC}{quantum error correction}
\acro{BP}{belief propagation}

\end{acronym}
\setcounter{page}{1}

\begin{abstract}
Quantum computers are highly vulnerable to noise, necessitating the use of error-correcting codes to protect stored data. 
Errors must be continuously corrected over time to counteract decoherence using appropriate decoders. 
Therefore, fast decoding strategies capable of handling real-time syndrome extraction are crucial for achieving fault-tolerant quantum computing.
\rev{In this paper, we introduce the bubble clustering (BC) decoder for quantum surface codes, which serves as a low-latency replacement for MWPM, achieving significantly faster execution at the cost of a slight performance degradation.} 
This speed boost is obtained leveraging an efficient cluster generation based on bubbles centered on defects, and avoiding the computational overhead associated with cluster growth and merging phases, commonly adopted in traditional decoders. 
Our complexity analysis reveals that the proposed decoder operates with a complexity on the order of the square of the number of defects. 
For moderate physical error rates, this is equivalent to linear complexity in the number of data qubits.

\end{abstract}

\begin{IEEEkeywords} Quantum Error Correcting Codes, Quantum Communications, Quantum Computing, Surface Codes
\end{IEEEkeywords}

\section{Introduction}

The groundbreaking potential of quantum computing is fundamentally constrained by the fragility of quantum states, which are highly vulnerable to decoherence~\cite{Pre:18, Pfi:23, Sho:95}.
To address this challenge, \ac{QEC} has emerged as a critical area of research to enable quantum computing and communication~\cite{NieChu:10, Bab:19 ,CacCalVan:20, ZorDePGio:23}. 
In this context, numerous short \acp{QECC} have been proposed in the literature as early methods for encoding quantum information~\cite{Sho:95, Got:96, Ste:96, Laf:96, Got:09, Bom06:colorCodes, ChiVal:20a}.

In recent years, significant efforts have been made to find quantum codes that are easier to implement.
Among them, one of the most promising approaches in quantum error correction is represented by surface codes~\cite{BraKit:98, FowMarMar:12}. 
These topological surface codes encode logical qubits onto a two-dimensional lattice of physical qubits, offering robust error protection with relatively practical implementation requirements~\cite{Rof:19, Vui19:GaugeFixing}.
With their high error threshold, locality, and scalability, surface codes have therefore emerged as a leading candidate for implementation in fault-tolerant quantum computing architectures~\cite{ZhaYouYe:22, AchRajAle:22, BluDolEve:23}.

The \acf{MWPM} decoder is currently the prevalent choice for surface code decoding \cite{Higg:22, Bro:23}. 
While it offers high threshold error rates, its high order polynomial time complexity can lead to latency issues, potentially hindering the performance of quantum computation architectures \cite{kol:09, Ter:15}.
Indeed, if the decoder is unable to process measurements quickly enough, the accumulating backlog of syndrome information can lead to an exponential increase in computation time~\cite{Ach:24}.
To overcome this latency problem, the PyMatching sparse blossom implementation was introduced, delivering a speed boost of up to 100 times compared to the standard \ac{MWPM} decoder~\cite{HigGid:23}. 
This improvement was achieved by focusing exclusively on the essential edges of the lattice needed for the blossom algorithm and by incorporating techniques such as compressed tracking.
A specialized version of the Sparse Blossom algorithm was optimized for the decoding of surface codes with distance $d = 5$ and $d = 7$ \cite{Ach:24}. 
It operates by employing multiple threads \rev{to} process syndrome information from different spacetime regions, which are combined to find a global minimum-weight perfect matching. 
\rev{Hereby, a greedy edge reweighting strategy  accounts for Y-type error correlations and improves accuracy.}

Among the alternatives to the \ac{MWPM} discussed in the literature, the \ac{UF} decoder stands out as a prominent choice \cite{Del:21, Del:20}. 
This decoder achieves an almost-linear worst-case runtime relative to the number of physical qubits.
However, despite its ability to correct errors up to the code distance, the \ac{UF} decoder is less accurate than the MWPM decoder \cite{Wu:22, Hig:23}.
The most time-consuming step of this decoding procedure is the syndrome validation phase. 
Specifically, the \ac{UF} decoding begins by initializing clusters, each containing a single vertex representing an ancilla that has detected an error during the syndrome measurement, often referred to as defect. 
Finally, each cluster is processed in order to obtain a suitable matching.

Alternative approaches are provided by the \ac{STM} and \ac{RFire} decoders \cite{ForValChi:24STM}.
\rev{Specifically, we define the defect graph as a complete graph in which the vertices represent defects, and the edges are weighted based on the error probabilities of the qubits that lie between each pair of defects.
In the case of \ac{i.i.d.} data qubit errors, these weights are equivalent to the number of qubits separating the pair.} 
It has been proven that, under a particular metric, all distinct matchings in the defect graph of a surface code are equivalent to the minimum-weight matching. Consequently, using this metric to identify errors ensures accurate correction up to the code minimum distance.
Building on this, the algorithms compute a minimum spanning tree over the defect graph to identify an appropriate matching for localizing the errors.
These decoders offer a notable improvement in decoding time compared to the standard LEMON implementation of the \ac{MWPM} decoder, though it comes with some trade-off in performance.

Finally, several promising developments in neural network-based decoders for surface codes have been reported. 
Specifically, an artificial neural network decoder designed for large surface code distances is introduced in \cite{Gic:23}. 
A key advantage of this approach is its near-constant execution time as the code distance increases.
Additionally, a neural network decoder for quantum surface codes, scalable to tens of thousands of qubits, has been proposed under depolarizing noise \cite{mei:22}. 
This method demonstrates improved error thresholds for depolarizing noise across various physical error rates, outperforming the standard union-find decoder.
\rev{Moreover, in \cite{miao:25}, the authors propose a \ac{BP} decoder for toric codes using overcomplete check matrices, and extend the neural \ac{BP} decoder from suboptimal binary to quaternary \ac{BP} decoding.}
As evidenced by the numerous proposals in the field, the pursuit of a fast decoder suitable for real-time decoding of surface codes remains a highly active research topic~\cite{HigGid:23, Del:21, Del:20, Wu:22, Hig:23, ForValChi:24STM, Pac24:QBitFlipping, Bra14:MPS, Rof20:BP, iOl23:SurveyQDec, Bar:25}.

In this paper, we introduce the \ac{BC} decoder for quantum topological codes.
In particular, our decoder can be used to decode any topological code that is typically decoded using \ac{MWPM}-based decoders.
\rev{For the sake of clarity and exposition, we focus on surface codes.
Among topological codes these are obtained using simple classical repetition codes with full-rank parity check matrices~\cite{ValForChi25:CylMob}.
Differently from toric codes, they involve boundaries which necessitate the additional consideration of ghost ancillas when designing a decoder. 
For this reason and the practical relevance of these codes, both the examples and the explanation of the decoding algorithm in this paper are centered on surface codes.} 
Using as a foundation the asymptotic performance analysis of quantum codes~\cite{ForValChi24:MacW}, we have tailored the decoding strategy to preserve the error correction capability of a code, while addressing the most harmful error patterns with weight beyond its correction capability.
The decoder is able to rapidly generate a series of clusters avoiding the time-consuming tasks of cluster growth and merging.
We begin by placing each defect at the center of a bubble with a uniform radius, carefully determining the value of the radius to ensure error correction up to the code distance. 
Next, we efficiently group defects into cluster trees, storing only the edges between adjacent defects.
Finally, we peel these trees and obtain a suitable matching.
To validate our decoder, we compare its time consumption and performance with several state-of-the-art decoders by extensive simulations.


This paper is organized as follows. Section~\ref{sec:preliminary} introduces preliminary concepts about \ac{QECC} along with a detailed discussion of efficient decoders for surface codes. 
In Section~\ref{sec:STD}, we thoroughly describe the proposed \ac{BC} decoder. 
Section~\ref{sec:complexity} provides a theoretical proof about the error correction capability preservation of our decoder and its complexity analysis.
Finally, numerical results are presented in Section~\ref{sec:NumRes}.

\section{Preliminaries and Background}
\label{sec:preliminary}

\subsection{Stabilizer Formalism}
\label{subsec:QEC}

The Pauli operators are denoted by $\PauliX, \PauliY$, and $\PauliZ$. 
A \ac{QECC} encoding $k$ logical qubits $\ket{\varphi}$ into a codeword of $n$ data qubits $\ket{\psi}$, with minimum distance $d$, is represented as $[[n,k,d]]$. 
This code can correct all error patterns involving up to $t = \lfloor(d-1)/2 \rfloor$ data qubits.
In the stabilizer formalism, each code is defined by $n-k$ independent and commuting operators $\M{G}_i \in \mathcal{G}_n$, known as stabilizer generators or simply generators, where $\mathcal{G}_n$ is the Pauli group on $n$ qubits  \cite{Got:09, NieChu:10}. 
The subgroup of $\mathcal{G}_n$ generated by all combinations of the $\M{G}_i$ is called the stabilizer and denoted as $\mathcal{S}$. The code $\mathcal{C}$ consists of quantum states $\ket{\psi}$ stabilized by $\mathcal{S}$, meaning they satisfy $\M{S}\ket{\psi}=\ket{\psi}$ for all $\M{S} \in \mathcal{S}$, or equivalently, $\M{G}_i \ket{\psi}=\ket{\psi}$ for $i=1, 2, \ldots, n-k$. 
Operators that commute with the stabilizer group but are not part of it are called logical operators.
The stabilizer generators specify measurements on quantum codewords that do not alter the original quantum state, and these measurements are conducted using additional ancilla qubits.
When an error $\M{E} \in \mathcal{G}_n$ affects a codeword, transforming the state to $\M{E}\ket{\psi}$, it is possible to extract a binary sequence $\V{s}$ (the error syndrome). 
The $i$-th entry $s_i$ of this sequence is zero if $\M{G}_i$ commutes with $\M{E}$ and one if $\M{G}_i$ anticommutes with it. 
In particular, ancillas measuring $s_i = 1$ are often called defects.
This enables quantum error correction through error syndrome decoding, using the binary sequence $\V{s}$ as input. 
\ac{CSS} \rev{codes} are an important class of quantum stabilizer codes \cite{CalSho:96,Ste:96}.
By definition, these codes have some of their generators composed of only Pauli $\M{X}$ operators and the others made up of only Pauli $\M{Z}$ operators. 
Consequently, \ac{CSS} codes facilitate efficient decoding by allowing $\M{X}$ and $\M{Z}$ errors to be corrected independently.

Surface codes represent a significant category of \ac{CSS} stabilizer codes, notable for arranging qubits on a planar sheet \cite{BraKit:98, DenKitLan:02, HorFowDev:12, AtaTucBar:21}. 
This configuration necessitates only nearest-neighbor interactions between qubits and enables a single round of stabilizer measurements through parallel operations \cite{BluDolEve:23}. 
In surface codes, logical operators are defined based on the topology of the lattice and the paths they form\cite{DenKitLan:02, FowSteGro:09, ForValChi:23}.
In particular, they can be visualized on the lattice: $\M{Z}_L$ operators are represented by paths extending horizontally from a boundary to the other one. Similarly, $\M{X}_L$ operators are represented by paths extending vertically.
The logical operators \rev{are} the undetectable error patterns of a quantum error correcting code, and for this reason they are crucial in understanding the performance of a code and in the design of a decoder~\cite{ForValChi:23, ForValChi:24STM}.

If we consider a depolarizing channel, where the different Pauli errors occur with the same probability, the logical error rate of a $t$-error correcting code can be approximated as \cite{ForValChi24:MacW}
\begin{align}
\label{eq:error_probWithBetaApprox}
p_\mathrm{L} 
&\approx \left(1-\beta_{t+1}\right) \binom{n}{t+1}p^{t+1} 
\end{align}
where $\beta_{t+1}$ is the fraction of errors of weight $t+1$ that the decoder is able to correct, and the approximation is valid for sufficiently low physical error rate $p$ and assuming that the decoder is able to correct all errors of weight up to $t$.

\subsection{Efficient decoders for surface codes }
\label{subsec:MWPM}

A matching in a graph is defined as a set of edges where no two edges share a common vertex. 
A perfect matching is a type of matching that includes every vertex in the graph \cite{Die:08}. 
A minimum weight perfect matching is a perfect matching with the smallest possible total edge weight.
In the context of quantum error correction, the \ac{MWPM} decoder builds a graph where vertices represent defects.
The edges in this graph are weighted according to the error probabilities of the qubits that lie between each pair of defects. 
Given an error syndrome, the standard implementation of the \ac{MWPM} decoder involves three sequential steps \cite{Higg:22}.
First, Dijkstra's algorithm is used to assign weights and construct the graph of defects as previously described. 
Next, the Blossom algorithm is employed to find the minimum weight perfect matching solution \cite{kol:09}.
Finally, Dijkstra's algorithm is utilized again to map the paired defects back to chains of faulty qubits in the actual lattice.
Note that, when dealing with \ac{i.i.d.} data qubit errors, the process can be significantly simplified by using the Manhattan distance\footnote{The Manhattan distance between two points in an $n$-dimensional space, with coordinates $(x_1, \dots, x_n)$ and $(y_1, \dots, y_n)$, is given by $\sum_{i=1}^{n} | x_i - y_i |$.} between defects. 
Moreover, for surface codes with boundaries, it is essential to introduce ghost defects before applying Dijkstra's algorithm \cite{Had:08}. Indeed, error chains can terminate at a boundary, creating an odd number of defects. 
To address this, the decoder includes a corresponding ghost defect for each real defect. 
In the final graph, to ensure the correct decoding procedure, these ghost defects are connected to each other with zero distance.
A similar implementation of the \ac{MWPM} decoder exhibits a worst-case complexity in the number of nodes $N$ in the graph of $O(N^3 \, \log (N))$, yet empirically, the expected running time for typical instances is approximately $O(N^2)$ \cite{Hig:23,Higg:22}.

Given that executing the three aforementioned steps sequentially can be computationally demanding, recent proposals have emerged to circumvent the Dijkstra step in constructing the edges of the defect graph. 
Notably, the PyMatching sparse blossom implementation dynamically discovers and stores an edge only when necessary for the blossom algorithm's operation \cite{HigGid:23}. 
This proposal adopts an error model where each error mechanism is defined by the generators and logical operators it affects, rather than by its Pauli type and circuit location. 
This approach enables the use of techniques such as compressed tracking, a sparse representation of paths in the defect graph. Compressed tracking stores only the endpoints of a path, along with the logical observables it affects (represented as a bitmask).
Specifically, the authors define compressed edges,  which denote a path through the defect graph linking two defects or a defect and a boundary. These compressed edges retain information about the two endpoints and the list of logical operators flipped by inverting each edge of the real lattice included in the path.
This method enables a highly efficient transition from the final matching to the decoded channel error, resulting from the algorithm. 
Such an implementation of the MWPM is expected to run between $100$ and $1000$ times faster than the standard implementation \cite{Hig:23}.  

Among the various alternatives to the \ac{MWPM} proposed in the literature, one of the most notable is the \ac{UF} decoder \cite{Del:21, Del:20}. 
The \ac{UF} decoder comprises a syndrome validation step and an erasure decoder. First, the process involves transforming the set of Pauli errors into clusters distinguished by an even parity of defects.
During this stage, all defects are initialized as separate clusters. 
Then, in each iteration, every odd cluster extends by half an edge in each direction, facilitating connections between defects rather than between a defect and a boundary. If a cluster encounters another one, they merge, with each defect from the smaller cluster becoming part of the larger one. 
In particular, the use of a tree representation for each cluster enables efficient execution of this step. 
When two odd clusters merge, or when a cluster encounters a boundary, they acquire even parity and cease to grow.
Finally, after a cycle detection and removal within clusters, the erasure decoder employs a peeling decoder to identify an appropriate matching for each cluster.
This decoder shows an almost-linear worst-case running time concerning the number of physical qubits \cite{Del:21}.
However, the \ac{UF} decoder, while offering error correction capability up to the code distance, is less accurate than the MWPM decoder \cite{Wu:22, Hig:23}. 

The \ac{STM} and \ac{RFire} decoders offer two efficient options for decoding surface codes \cite{ForValChi:24STM}. 
The authors observed that, given the defect graph, two distinct matchings are possible, except for the application of some stabilizers: one always corrects the error, while the other introduces a logical operator.
They employ a decision technique that minimizes the number of columns, i.e., horizontally aligned edges in the lattice, traversed an odd number of times.
It has been proven that using this decision technique, all distinct matchings in the defect graph of a surface code are equivalent to the minimum-weight matching. 
Thus, an error with weight $\leq t$ is always corrected. 
Specifically, the \ac{STM} decoder computes a minimum spanning tree of the complete defect graph to obtain a suitable matching.
This can be achieved with a complexity of $O(M \log n_\mathrm{d})$, where $n_\mathrm{d}$ is the number of defects and $M = \binom{n_\mathrm{d}}{2}$ represents the edges in the graph \cite{Cor:22}.
The \ac{RFire} decoder, on the other hand, starts from the complete defect graph and iteratively connects pairs of nearest ancillas to find an appropriate matching. 
These decoders significantly improve decoding time compared to the standard LEMON implementation of the \ac{MWPM} decoder, although they come with some trade-offs in performance.

\subsection{\rev{Comparison with Union-Find Decoder}}
The \ac{BC} decoder fundamentally diverges from the \ac{UF} decoder in several critical aspects.
First, while the \ac{UF} decoder processes syndrome information directly on the real lattice of the surface code, the bubble clustering decoder operates on the defect graph.
As a consequence, the \ac{UF} decoder relies on iterative cluster growth and merging steps.
In contrast, the \ac{BC} decoder eliminates these steps by predefining a maximum cluster radius based on the number of defects, ensuring that any two defects within this radius are grouped into the same cluster \rev{(see Section~\ref{sec:SCP})}.
Moverover, the \ac{UF} decoder requires cycle detection and removal within clusters due to the iterative growth on the lattice, while \ac{BC} avoids cycles inherently by working directly on the defect graph, checking membership before adding defects to clusters.
Note that each cluster in the \ac{UF} decoder may contain not only defects but also a large number of additional vertices corresponding to switched-off generators, along with a significant number of edges. In contrast, each cluster in the \ac{BC} decoder consists exclusively of defects as vertices and includes only the edges directly incident to them.
Consequently, the described peeling procedure is inherently more efficient than the erasure decoding process used in the \ac{UF} algorithm \rev{(see Section~\ref{sec:PP})}.
At this stage, the \ac{UF} decoding terminates, guaranteeing code distance correction. 
In contrast, the \ac{BC} decoder may need to select the final matching based on \eqref{eq:metricCol} to ensure the preservation of the code distance, before mapping each matching to the corresponding faulty qubits in the lattice \rev{(see Section~\ref{subsec:EC})}.

\section{Bubble Clustering Decoder: Description}
\label{sec:STD}

In \cite{ForValChi:24STM} it was proven that both the \ac{STM} and the \ac{RFire} decoders are able to guarantee the error correction capability $t$ of surface codes.
However, due to some uncorrected error patterns of weight $t+1$, a gap in performance between these decoders and the \ac{MWPM} arises.
In particular, we observe that several error patterns, among the ones causing the performance discrepancy, have errors located far away to each other in the lattice grid.
For instance, the error patterns shown in Fig.~\ref{Fig:errorsNONc} cannot be decoded by the \ac{STM} or \ac{RFire} decoders, whereas the \ac{MWPM} decoder successfully decodes them.
To address this issue, we clustered the defects into smaller subsets using an ad-hoc radius, enabling the decoding of multiple erroneous patterns and approaching the performance of \ac{MWPM}. 
Moreover, to further reduce execution times, we developed an algorithm that, while performing the clustering, simultaneously generates a series of trees, i.e., connected graphs without cycles.
In the following Sections, we delve into the details of our \ac{BC} decoder that starts from the error syndrome (i.e., the defect locations) and returns an estimated error pattern. 
In particular, the four phases composing our solution are: radius evaluation phase, a bubble clustering phase, a peeling phase, and an error correction phase (see Fig.~\ref{Fig:blockD}).  

\begin{figure}[t]
 	\centering
 	\includegraphics[width = \columnwidth]{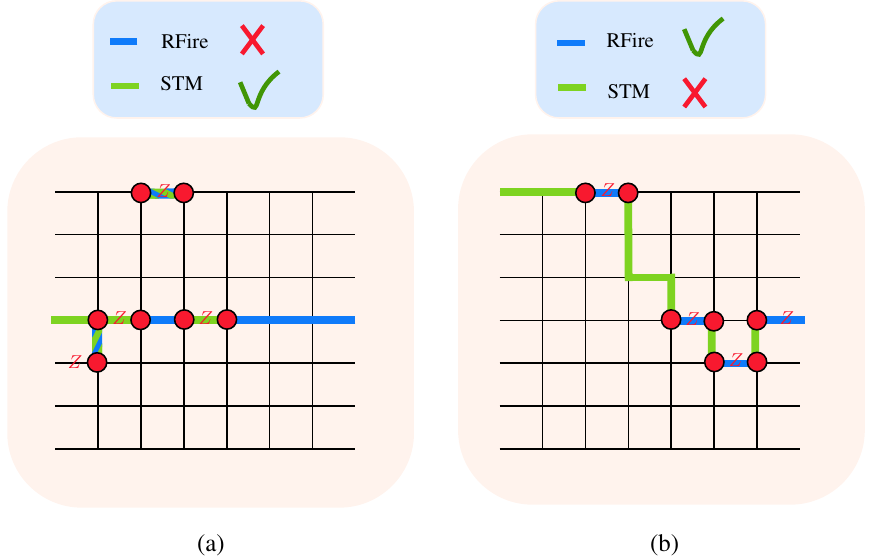}
 	\caption{Examples of non corrected error patterns in a $[[85,1,7]]$ surface code for the \ac{RFire} and \ac{STM} decoders. 
    Each faulty qubit is represented by a red $\M{Z}$ symbol, while defects are illustrated as red dots. 
    The correction operators applied by the RFire decoder are represented by thick blue edges, whereas those applied by the STM decoder are represented by green edges.
    a) An error pattern of weight $t + 1$ that is not corrected by the \ac{RFire} decoder but is corrected by the \ac{STM} decoder.
    b) An error pattern of weight $t + 1$ that is not corrected by the \ac{STM} decoder but is corrected by the \ac{RFire} decoder. Both error patterns are corrected by the \ac{BC} decoder.
  }
 	\label{Fig:errorsNONc}
\end{figure}
\begin{figure*}[t]
 	\centering
     \includegraphics[width = \textwidth]{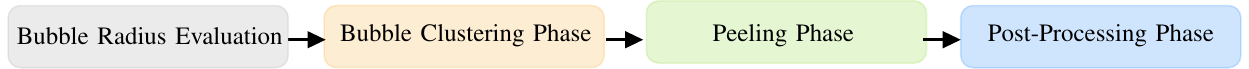}
 	\caption{Block diagram of the bubble clustering decoder.}
 	\label{Fig:blockD}
\end{figure*}
\begin{figure}[t]
 	\centering
\includegraphics[width=0.99\columnwidth]{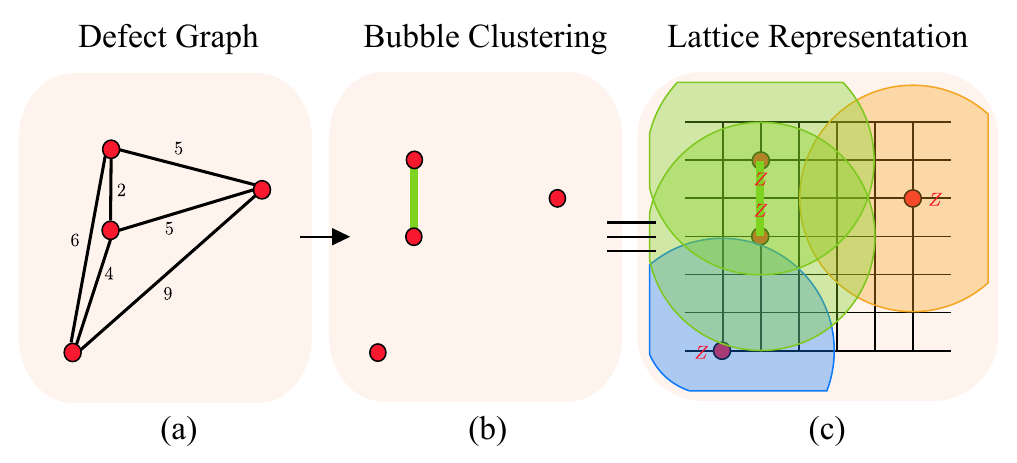}
 	\caption{ \rev{Example of bubble clustering phase for a $[[85,1,7]]$ surface code.}
    }
 	\label{Fig:sphereClustering2}
\end{figure}
\begin{figure*}[t]
 	\centering
     \includegraphics[width = \textwidth]{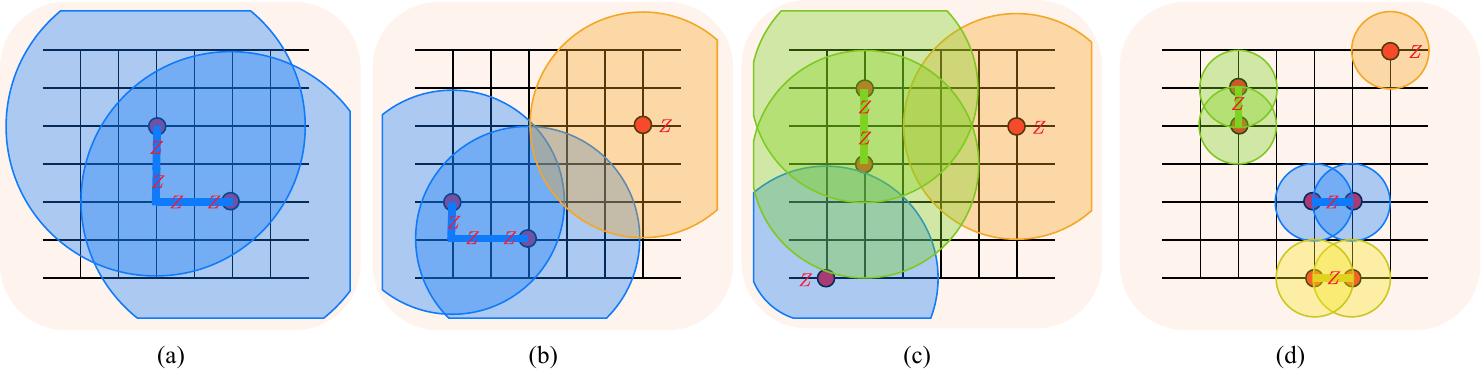}
 	\caption{ Example of bubble clustering phase for a $[[85,1,7]]$ surface code.
    A number of $\M{Z}$ errors occurred on the lattice resulting in defects depicted as red circles.
    Different clusters are depicted with different colors, and adjacent defects are connected by edges of the corresponding color.
    a) Four $\M{Z}$ errors with $n_\mathrm{d} = 2$ and $R_{\text{sph}} = 4$, resulting in one single cluster.
    b) Four $\M{Z}$ errors with $n_\mathrm{d} = 3$ and  $R_{\text{sph}} = 3$, resulting in two different clusters.
    c) Four $\M{Z}$ errors with $n_\mathrm{d} = 4$ and $R_{\text{sph}} = 3$, resulting in three different clusters.
    d) Four $\M{Z}$ errors with $n_\mathrm{d} = 7$ and $R_{\text{sph}} = 1$, resulting in four different clusters.
    \rev{Note that two defects can belong to the same cluster if they both lie within the region where their corresponding bubbles intersect, or if a third defect exists such that the distance between each defect and the third defect is less than or equal to the radius.}}
 	\label{Fig:sphereClustering}
\end{figure*}
%

\rev{Since surface codes are \ac{CSS} codes, we will, for the sake of simplicity, consider the $\M{Z}$ decoder, i.e., the primal lattice where the sites constitute the $\M{X}$ generators.}
The same reasoning can be applied to the dual lattice to implement the $\M{X}$ decoder. 


\subsection{Bubble Radius Evaluation}
\label{sec:SRE}

The primary goal of this stage is to determine the optimal radius to cluster the defects on the lattice. 
Specifically, each defect is placed at the center of a bubble with uniform size. 
This step is critical because defects located in different bubbles will never be paired in the final solution.

From \eqref{eq:error_probWithBetaApprox}, we observe that the performance of an $[[n, k, d]]$ code, with $t = \lfloor(d-1)/2 \rfloor$, is asymptotically determined by the fraction of errors of weight $w = t + 1$ that the decoder can correct.
The main idea is to exploit the number of defects to determine the smallest possible radius of the bubbles such that, given any error pattern of weight $\leq t + 1$, the clustering is carried out in such a way that it never compromises the error correction capability $t$ and aids as much as possible the decoding of $t+1$ error patterns.
\rev{For instance, in the case of a $t + 1$ chain of adjacent errors, resulting in only a pair of defects, we would like to have them inside the same cluster.
This is motivated by the fact that having a single defect inside a cluster leads to the decoding strategy which connects it to its nearest boundary. 
This greedy strategy impairs the error correction capability and we decided to avoid it in the design.}
For this reason, we have that the radius must be at least $R_{\text{sph}} = t + 1$ when the number of defects is $n_\mathrm{d} = 2$.
This ensures that the two defects belong to the same cluster, allowing them to be connected in subsequent steps of the decoding process.
\rev{Therefore, a simple but suboptimal solution is to set this radius to $R_{\text{sph}} = t + 1$.
However, if there are more than two defects in the lattice, this radius can be reduced to potentially improve the error
correction capability of errors with weight $t+1$.}
As a simple example, adopting this choice, both the error patterns in Fig.~\ref{Fig:errorsNONc} would form a single cluster, leading to the performance of one of the decoders presented in \cite{ForValChi:24STM}.

In general, the radius should be determined by evaluating whether two defects need to be connected or not in the solution. 
For clarity, let us first consider a lattice with three defects. 
When focusing on two of these defects, they must be connected if their distance is at most $t + 1$ minus the weight of the edges connecting the remaining defect to a boundary. 
This ensures that the error correction capability is not compromised.
Since there is a single unpaired defect, in the worst-case scenario, it could be connected to a boundary via an edge of weight one, resulting in a bubble radius of $t$.
This scenario represents the most conservative case, where the radius is the largest possible, leading to fewer clusters overall.
Indeed, if a single cluster is generated, the error correction capability becomes similar to that of the \ac{RFire} decoder.
Similarly, for the case with four defects, two defects must be connected if their distance is at most $t + 1$ minus the weight of the edges connecting the other two defects.
In the worst-case scenario, these defects are connected to each other via an edge of weight one, which results again in a radius of $t$.
Generalizing this trend, we observe that from the maximum value of the radius $t + 1$ it is possible to subtract an edge of weight one for each pair of defects, apart from the initial pair that is being analyzed, to determine if they should be connected.
Therefore, the minimum value of the radius, while satisfying the conditions outlined above, will be $t + 1$ minus $\left \lceil{ n_\mathrm{d} / 2} \right \rceil - 1 $, leading to
\begin{align}
\label{eq:Rmax}
R_{\text{sph}} = t + 2 - \left \lceil{\frac{n_\mathrm{d}}{2}} \right \rceil \,.
\end{align}

\subsection{Bubble Clustering Phase}
\label{sec:SCP}

\begin{algorithm}[t]
\SetKwInOut{Input}{input}
\SetKwInOut{Output}{output}
\caption{Bubble Clustering Phase} \label{algo:SCP}
\Input{$n_\mathrm{d}$, $\V{v}$, $R_{\text{sph}}$; \\} 
\Output{$\M{A}$, adjacency matrix; \\
$\V{p}$, indexes of the clusters for each defect;\\
$\V{c}$, cardinality of each cluster; \\ $\M{L}$, matrix of clusters; \\ $\V{o}$, order of each defect; \\  } 
\BlankLine
init $\M{A}$, $\V{p}$, $\V{c}$ to all zeros; \\
$N_c \gets 0$; \\
$\mathrm{totalDef} \gets 0$; \\
\While{$\mathrm{totalDef} \leq n_\mathrm{d}$}{
$\mathrm{contDefect} \gets 1$; \\
\While{$\V{p}[\mathrm{contDefect}] \neq 0$}{$\mathrm{contDefect} \gets \mathrm{contDefect} + 1$;\\}
$N_c \gets N_c + 1$; \\
$\mathrm{totalDef} \gets \mathrm{totalDef} + 1$; \\
$\V{c}[N_c] \gets \V{c}[N_c] + 1$; \\
$\M{L}[N_c][\V{c}[N_c]] \gets \V{v}[\mathrm{contDefect}]$; \\
$\V{p}[\V{v}[\mathrm{contDefect}]] \gets N_c$; \\
$\mathrm{flagEndCluster \gets 0}$; \\
$\mathrm{currentDef} \gets 1$; \\
\While{$\mathrm{flagEndCluster = 0}$}{
$\mathrm{flagEndCluster \gets 1}$; \\
\ForAll{$n_{\mathrm{d}_i} \in \V{v}$}{
\If{$\texttt{evalD}(\M{L}[N_c][\mathrm{currentDef}], n_{\mathrm{d}_i}) \leq R_{\text{sph}}$}
{
\If {$\V{p}[n_{\mathrm{d}_i}] = 0$}{
$\M{A} \gets \texttt{adjDef}(\mathrm{currentDef},n_{\mathrm{d}_i}) $; \\
$\V{o}[\M{L}[N_c][\mathrm{currentDef}]] \gets \V{o}[\M{L}[N_c][\mathrm{currentDef}]] + 1 $; \\ 
$\V{o}[n_{\mathrm{d}_i}] \gets \V{o}[n_{\mathrm{d}_i}] + 1 $; \\ 
$\V{c}[N_c] \gets \V{c}[N_c] + 1$; \\
$\M{L}[N_c][\V{c}[N_c]] \gets n_{\mathrm{d}_i}$; \\
$\V{p}[n_{\mathrm{d}_i}] \gets N_c$\; 
$\mathrm{totalDef} \gets \mathrm{totalDef} + 1$; \\
$\mathrm{flagEndCluster \gets 0}$\;
}
} 
}
\If{$\mathrm{flagEndCluster = 0}$}
{
$\mathrm{currentDef} \gets \mathrm{currentDef} + 1$\;}
}
}
\end{algorithm} 

During this stage, utilizing the information provided by the syndrome $\V{s}$ along with the calculated $R_{\text{sph}}$, all defects are organized into one or more clusters. 
In doing so, a subset of edges from the lattice is assigned to each cluster, ensuring the formation of trees, thus preventing the occurrence of any cycles. 
To formalize the algorithm, let us define $\V{v}$ as \rev{the column vector of size $n_\mathrm{d}$} containing the indexes of each defect, and $\M{L}$ as the matrix containing in the $i$-th row the list of the defects belonging to the $i$-th cluster.
The size of $\M{L}$, $N_c \times n_\mathrm{d}$, is determined during the clustering phase depending on the number of clusters needed.
The procedure unfolds as follows.
During the initialization step, the first defect $v_1$ is added to the first row of $\M{L}$, referred to as $\V{l}_1$.
Subsequently, the algorithm systematically examines whether any other defect lies within a distance less than or equal to $R_{\text{sph}}$ from $v_1$.
If, for instance, this condition is verified for the $j$-th defect, then $v_j$ is included in $\M{l}_1$. 
The proposed goal can be effectively achieved employing the function $\texttt{evalD}$, which computes the distances between two vertices in the grid, calculated based on their integer coordinate positions, enabling precise and efficient measurement of spatial relationships.
In particular, defining as $q_i$ and $r_i$ as the quotient and the remainder of the integer division between $s_i$ and $2t$,  the function \texttt{evalD}($s_i,s_j$) returns $|q_i - q_j| + |r_i - r_j|$.
Additionally, we employ the function $\texttt{adjDef}$ to record that defects $v_1$ and $v_j$ are adjacent to each other using an $n_\mathrm{d} \times n_\mathrm{d}$ adjacency matrix $\M{A}$.
Then, the same procedure is sequentially applied to all the other defects added to $\M{l}_1$. 
Starting from the second defect of each cluster, we have to verify whether a particular defect is already present in the corresponding list before adding it.
This can be performed by maintaining an array storing the cluster membership of each defect, thereby preventing cycles.
Note that a defect that is already part of a cluster, meaning it is adjacent to at least one other defect, can only become adjacent to additional defects when considering the bubble centered around that defect. 
This is essential to ensure that the algorithm produces a tree structure.
For each defect, \rev{the column vector $\V{p}$ of size $n_\mathrm{d}$} stores the index of the cluster to which it belongs.
\rev{Also, the cardinality of each cluster is stored in a column vector $\V{c}$ of size $N_c$.}
Once all defects in $\M{l}_1$ have been processed, if there are any remaining defects not yet included in the list, the first of these defects is added to $\M{l}_2$, and a new cluster is initialized.
This process iterates until all defects have been assigned to a cluster, resulting in a total of 
$N_c$ clusters.
\rev{An instance of the bubble clustering procedure, illustrating the defect graph, is shown in Fig.~\ref{Fig:sphereClustering2}.
For clarity, we will henceforth use the lattice representation notation.}
Some examples of bubble clustering procedures for the $[[85,1,7]]$ surface code are depicted in Fig.~\ref{Fig:sphereClustering}.

Note that, with this procedure, once a cluster is formed, no additional defects can be added to it.
As a result, there is no need for time-consuming processing to merge defects from different clusters.
This bubble clustering stage is outlined in Algorithm~\ref{algo:SCP}.
During the algorithm execution, a vector $\V{o}$ is initialized to record the order of each defect.



\subsection{Peeling Phase}
\label{sec:PP}

\begin{algorithm}[t]
\SetKwInOut{Input}{input}
\SetKwInOut{Output}{output}
\caption{Peeling Phase} \label{algo:PP}
\Input{$\V{s}$, $n_\mathrm{d}$, $\M{A}$, $\V{p}$, $\V{c}$, $\M{L}$; \\} 
\Output{$\{ \mathcal{E}_{1}^{1} , \dots, \mathcal{E}_{N_c}^{1} \}$, list of the matchings; \\
$\{ w_1^1 , \dots, w_{N_c}^1 \}$, list of the weights;
}
\BlankLine
\ForAll{$i \in \{1,\dots,N_c\}$}{
$\mathrm{clusterDim}\gets \V{c}[i]$\;
$\mathrm{idx}  \gets \texttt{addGhost}(i)$\;
$\texttt{buildMatch}(\mathrm{idx})$\;
$\V{s}[\mathrm{idx}] \gets 0$\;
\While{$\V{c}[i] > 0$}{
\ForAll{$j \in \{1,\dots,\mathrm{clusterDim} \} $}{
\If{$\V{o}[\M{L}[i][j]] = 1$}{
$\mathrm{idxAdj} \gets \texttt{Adjacent}(\M{L}[i][j])$; \\
\If{$\V{s}[\M{L}[i][j]] = 1$}{
$\texttt{buildMatch}(\M{L}[i][j], \mathrm{idxAdj})$; \\
$\V{s}[\M{L}[i][j]] \gets 0$; \\
$\V{s}[\mathrm{idxAdj}] \gets 1 - \V{s}[\mathrm{idxAdj}]$; \\
} 
\If{$\V{o}[(\mathrm{idxAdj}] = 1$}{
$\V{c}[i] \gets\V{c}[i] - 1 $; \\
}
$\texttt{peel}(\M{L}[i][j],\mathrm{idxAdj})$; \\
$\V{c}[i] \gets \V{c}[i] - 1 $; \\
}
}
}
}
\end{algorithm} 

The goal of this step is to begin with the obtained tree and generate a suitable matching for each cluster. 
Then, in the final post-processing phase, each matching will be evaluated using specific metrics, after which it will either be selected as the final solution or set aside for further consideration. 
In the latter case, another solution is generated and an additional peeling phase will be required for that cluster \rev{(see Section~\ref{subsec:EC})}.
Specifically, the peeling phase, described in Algorithm~\ref{algo:PP}, consists of two distinct sections: a pre-processing stage and a matching stage.
Note that the processing described below is performed independently for each cluster, where a tree has been generated by the previous step if more than one defect resides in the cluster.
In the following, we will define the order of a defect as the number of its adjacent defects in the tree of its cluster.
A defect of order one is called \emph{boundary} defect.

\begin{algorithm}[t]
\SetKwInOut{Input}{input}
\SetKwInOut{Output}{output}
\caption{\texttt{addGhost}} \label{algo:ghost}
\Input{$i$, cluster index;} 
\Output{$s_i$, $s_j$, vertices connected to the ghost ancillas;}
\BlankLine
\If{first peeling phase}{
\If{cluster $i$ has an odd number of defects}{
attach a ghost ancilla to the nearest defect $s_i$;\\
}
}
\Else{
\If{cluster $i$ has an odd number of defects}{
attach a ghost ancilla to the nearest defect $s_i$ on the opposite boundary with respect to the boundary selected in peeling phase;\\
}
\Else{
\rev{attach two ghost ancillas: one to the nearest defect $s_i$ on the left boundary, and one to the nearest defect $s_j$ on the right boundary;}
}
}
\end{algorithm}

\subsubsection{Ghost Ancillas Addition}
This step is required to assure that each cluster tree comprises an even number of defects and then is applied for all clusters with an odd number of defects.  
Indeed, since surface codes feature non-periodic boundaries, it is possible for an error chain to terminate at a boundary, resulting in the creation of just a single defect.
In such cases, obtaining a suitable match is not feasible.
Therefore, if the cardinality of a cluster is odd, a ghost ancilla is added to one of its defects according to the following criteria.
The required ghost ancilla is connected to the defect closest to a boundary.
If multiple defects are equidistant, the ghost ancilla is attached to the defect that is farthest from the other defects in the cluster.
In particular, the defect that is farthest from the others in the cluster is the one with the maximum distance from its cluster nearest neighbor.
Additionally, we retain in memory the specific boundary (i.e., left or right) to which the ghost ancilla is attached.
In case of a second peeling solution, if the $i$-th cluster contains an even number of defects, we include a ghost ancilla on both the right and left boundaries. 
This is done to preserve the even parity of the defects.
On the other hand, if the number of defects is odd, we add a ghost ancilla to the right (left) boundary if the cluster was attached to the left (right) boundary during the previous peeling phase.
The criteria for determining which defects these ghost ancillas connect to remain the same.
For conciseness, we utilize the function \texttt{addGhost} which takes as input the cluster index and returns the index of the adjacent defect as previously outlined (see Algorithm~\ref{algo:ghost}).
\rev{We remark that the algorithm handles defects sequentially from left to right and top to bottom. As a result, when two defects are equidistant from the boundary and have the same distance from the nearest defect, the ghost ancilla is always assigned to the first one encountered in this order.}
Note that all the required information can be stored during the clustering phase. 
Here it has been presented separately for the sake of presentation clarity.

\begin{algorithm}[t]
\SetKwInOut{Input}{input}
\SetKwInOut{Output}{output}
\caption{\texttt{buildMatch}} \label{algo:match}
\Input{$s_i$, $s_j$, vertices;\\
$k$, index of the cluster;\\} 
\Output{$\mathcal{E}_k$, current solution (edge/qubit set);}
\BlankLine
Compute the vertical and horizontal coordinates of $s_i$ and $s_j$ using modular arithmetic\;
Start from $s_i$\; 
Move step by step along the vertices in the vertical direction, until reaching a vertex $s_k$ that has the same vertical coordinate as $s_j$\; 
For each step, include in $\mathcal{E}_k$ any edge that connects two consecutively visited vertices\;
Start from $s_k$\;
Move step by step along the vertices in the horizontal direction, until reaching $s_j$\;
For each step, include in $\mathcal{E}_k$ any edge that connects two consecutively visited vertices\;
\end{algorithm}

\subsubsection{Tree Peeling}
In this stage, for each cluster $i$-th, the corresponding tree is iteratively peeled to achieve a first suitable matching $\mathcal{E}_i^{1}$.
For this purpose, we define as vertices the defects present in the current cluster, all initially set to be switched on (i.e., \rev{set} $s_i$ to one).
For each cluster, there are two possibilities: it could contain either an even or an odd number of vertices. 
If the number is odd, we must add edges between the vertex selected by \texttt{addGhost} and the nearest boundary to complete the final matching.
\rev{After this,} we switch off (i.e., \rev{set} $s_i$ to zero) the vertex selected by \texttt{addGhost}.
In this way, we have ended up to an even number of vertices, allowing the algorithm to proceed identically for both possibilities.
In Algorithm~\ref{algo:PP}, we employ the function $\texttt{buildMatch}$ to describe this matching.
This function takes as input the indices of two vertices and inserts all the qubit edges between them into the final solution (i.e., the qubit in the path composing the edge of the tree). 
On the other hand, if the function is called with only one single index, it adds the edges between the input vertex and the nearest boundary to the final matching.
In this way, we connect it to the ghost ancilla in that cluster.
This can be efficiently handled by using simple modular arithmetic operations, which exploit the regular structure of the lattice to directly compute qubit indices without requiring complex lookups or additional overhead.

After this pre-processing, the algorithm processes each boundary vertex (i.e., a vertex of order one) within the cluster as follows. 
To identify it, we employ the vector $\V{o}$, prepared during the clustering phase, \rev{which contains the order of each defect.}
For each of the boundary vertices, we have one adjacent vertex indexed by \texttt{idxAdj}, retrieved using the function \texttt{Adjacent}.
Then, if the current boundary vertex is switched on, we add its incident edge to the final matching $\mathcal{E}_{i}^{1}$, and its adjacent vertex is flipped (i.e., $s_i \leftarrow 1 - s_i $). 
Afterward, even if the current boundary vertex was switched off, we remove the incident edge from the graph and update the vertex orders accordingly.
This is accomplished using the function \texttt{peel}. 
Finally, before proceeding with the next boundary vertex, we also update the cardinality of the cluster $\V{c}[i]$.
The desired matching among the initial defects is achieved as soon as the tree becomes an empty graph, which terminates the procedure.
In addition, we keep track of the number of physical qubits associated with each matching. 
This count is referred to as the weight $w_i^{1}$ of the matching $\mathcal{E}_{i}^{1}$.
The peeling procedure is detailed in Algorithm~\ref{algo:PP}.
The functions \texttt{addGhost} and \texttt{buildMatch} are detailed in Algorithm~\ref{algo:ghost} and Algorithm~\ref{algo:match}\rev{, respectively.} 

We remark that this peeling algorithm is highly efficient, especially when compared to the \ac{UF} decoder. 
In our approach, only the edges connecting adjacent defects in the tree are stored. 
In contrast, the \ac{UF} decoder expands clusters in all directions across the lattice to match parity, requiring the storage and processing of a much larger number of edges.
An example of peeling phase procedure, for the error pattern of Fig~\ref{Fig:sphereClustering}b, is described in the following example.
\begin{example}
In Fig~\ref{Fig:TreePeel} we report an example of peeling phase procedure.
The arrows indicate the currently processed defect, while switched-off defects are represented as white circles.
Vertices are processed sequentially from left to right and top to bottom during the operation.
If a ghost ancilla is present, it will always be processed first.
Edges included in the solution are highlighted in green. 
a) Radius evaluation phase.
b) Bubble clustering phase and addition of a ghost ancilla where necessary.
c) A boundary vertex for the orange cluster is identified.
Since this vertex is currently switched on, it will be switched off along with its adjacent defect.
The incident edge will also be added to the solution.
The edge connected to the ghost ancilla is added into the blue solution, and the adjacent vertex is switched off.
d) A boundary vertex for the blue cluster is identified. 
Since this vertex is currently switched on, it will be switched off along with its adjacent defect. 
The incident edge will also be added to the solution.
e, f) A boundary vertex is identified.
Since this vertex is switched off, the incident edge will be discarded.
g) A boundary vertex for the blue cluster is identified.
Since this vertex is currently switched on, it will be switched off along with its adjacent defect.
The incident edge will also be added to the solution.
h) Resulting matching for both clusters.
\end{example}

\begin{figure*}[t]
 	\centering
     \includegraphics[width = 0.99\textwidth]{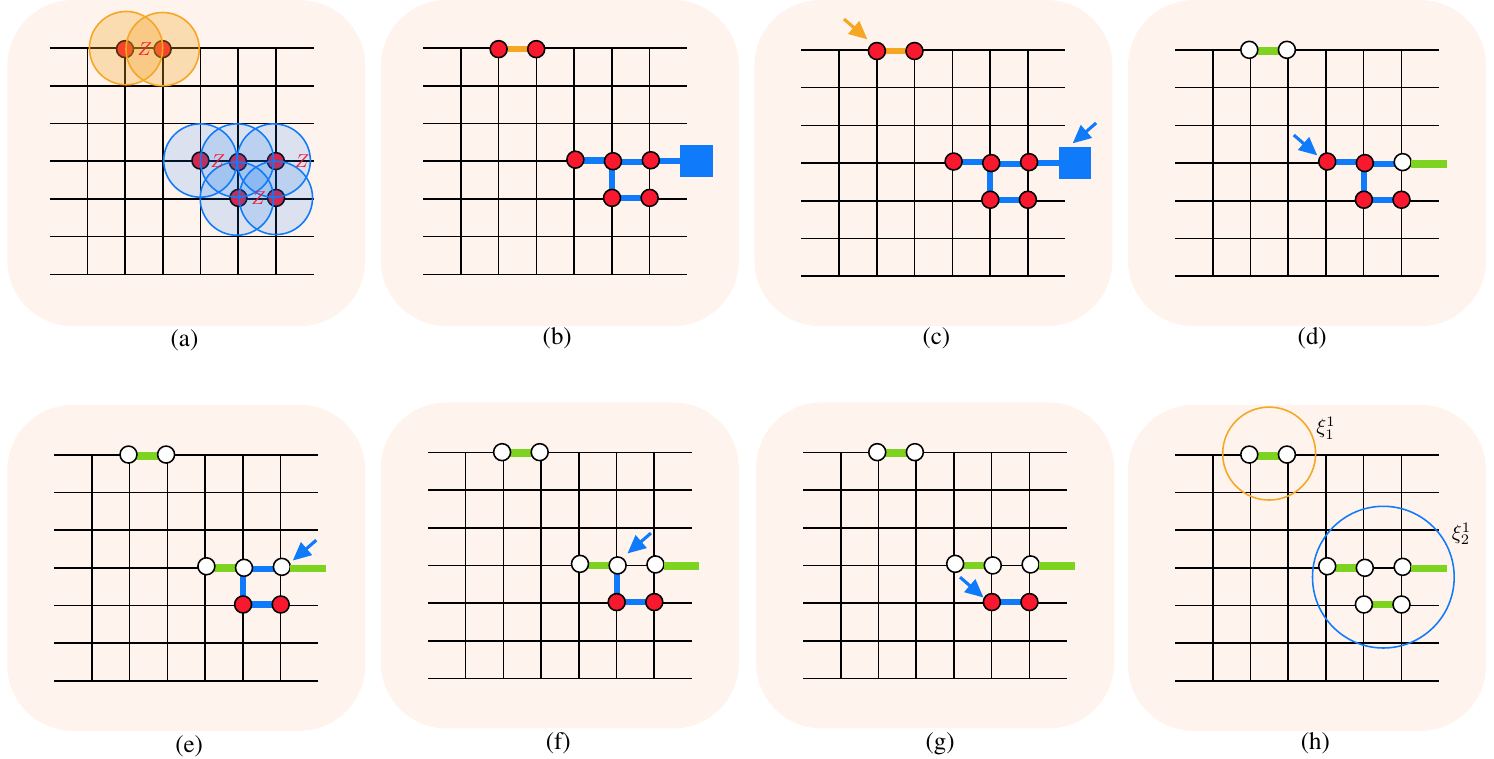}
 
 \caption{A detailed step-by-step example of the peeling phase procedure on a $[[85,1,7]]$ surface code with four errors present in the lattice. }
 	\label{Fig:TreePeel}
\end{figure*}

\subsection{Post-Processing Phase}
\label{subsec:EC}

The goal of this phase is to identify a suitable error correction pattern for each cluster, utilizing the matching solutions obtained from previous stages.
For each cluster, if $w_i^{1} \leq t$, the corresponding matching is selected as the error pattern and this phase is skipped.
On the other hand, if the $i$-th cluster does not satisfy this condition, a post-processing is required (i.e., an additional peeling phase).
The main idea is to construct a new solution differing from the previous one by a logical operator. 
In this way, we can compare the two solutions using our tailored metric.

Then, the tree peeling is performed on this new graph to obtain the matching $\mathcal{E}_i^{2}$.
The error pattern is chosen as follows: 
\begin{itemize}
    \item if $w_i^{2} \leq t$, $\mathcal{E}_i^{2}$ is chosen as the current solution;
    \item else if $w_i^{1} = t + 1$, $\mathcal{E}_i^{1}$ is chosen as the current solution;
    \item else if $w_i^{2} = t + 1$, $\mathcal{E}_i^{2}$ is chosen as the current solution.
\end{itemize}
At this point it is very likely that the solution has been already determined. However, if both $w_i^{1}$ and $w_i^{2}$ have weight $> t+1$, a different decision criterion, borrowed from \cite{ForValChi:24STM}, is applied.
Let us define a \emph{column} as the set of horizontal edges aligned vertically on the lattice, and let us enumerate columns from left to right, ranging from $1$ to $d$.
Then, considering cluster $i$-th, we define two vectors $\V{c}_i^{j}$ with entries $c_i^{j,k}$, where $j=1, 2$ and $k=1, \dots, d$, representing the cardinality of the intersection between $\mathcal{E}_i^{j}$ and the $k$-th column. Also, we define as $\V{u}_i^{j}$ a vector with entries $u_i^{j,k} = c_i^{j,k} \mod 2$.
Hence, the final error pattern consists in the matching that minimize
\begin{align}
    \label{eq:metricCol}
    f(\mathcal{E}_i^j) = \sum_{k=1}^{d} u_i^{j,k}\,.
\end{align}
This decision criterion consistently preserves the distance of the code~\cite{ForValChi:24STM}.

\subsection{Adjustments for High Physical Error Tolerance}\label{sec:PhyErr}
Previously, we focused our attention on guaranteeing the error correction capability $t$ and obtaining a good performance in the low physical error region given by error patterns of weight $t+1$.
Here, we present some simple adjustments for the bubble clustering phase which mainly target error patterns of weight greater than $t+1$, affecting the performance curve in the high physical error regime.

\subsubsection{Setting the Minimum} Bubble Radius
The \ac{BC} decoder, as described in previous sections, achieves asymptotic performance that is very close to that of the standard \ac{MWPM} decoder. 
However, some simple precautions are necessary when dealing with high physical error rates, specifically when the number of defects exceeds $2t + 2$.
In particular, according to \eqref{eq:Rmax}, the maximum radius of a cluster would be zero, making it impossible to connect any pair of defects and always relying on boundary connections.
Hence, to avoid failures in the decoding process caused by this issue, we adjust the bubble radius as 
\begin{align}
\label{eq:RmaxHE}
R_{\text{sph}} =  \begin{dcases}
    t + 2 - \left \lceil{\frac{n_\mathrm{d}}{2}} \right \rceil & \text{if}~n_\mathrm{d} \leq 2t, \\
    2 & \text{otherwise}.\\
\end{dcases}
\end{align}
Note that, using \eqref{eq:RmaxHE}, not only the radius cannot be zero, but also it cannot be one.
This has been done to guarantee that error patterns with a double error occurring on two adjacent qubits in the actual lattice reside always in the same cluster if all other errors are not interfering with their defects.
Beside this reasoning, we also fine-tune this adjustment testing other possibilities, such as constant radius one and non-constant shapes.
The solution in \eqref{eq:RmaxHE} was the one having the best performance for error patterns of weight greater than $t+1$.
In Fig.~\ref{Fig:plot_adj} we show the impact of this adjustment on a $[[85,1,7]]$ surface code. 
We observe that, for high physical error rates, applying these radius modifications provides an advantage in terms of the logical error rate, as demonstrated by comparing the black and yellow curves. 
This improvement arises from the enhanced ability to correct errors of weight $ \geq t + 1$.


\subsubsection{Star-Defects Avoidance}
Recall that the bubble clustering stage, described in Section~\ref{sec:SCP}, is performed on the defect graph. 
This implies that there is no direct correspondence between the edges in the resulting clusters and the physical qubits in the lattice.
Hence, since the defects are processed sequentially, due to $R_\mathrm{sph}$ it is possible that several defects are connected to the defect under processing even if a different defect was the nearest neighbour, creating a \emph{star}-like graph.
These star defects do not harm the decoder ability to correct up to the \rev{code distance}, however, they worsen the performance in the high physical error regime (capability to correct error pattern with more than $t$ errors).  
For instance, let us consider the bubble clustering stage in Fig.~\ref{Fig:postProc}a.
Since the error causes five defects, $R_{\text{sph}} = 2$.
Then, according to the standard procedure and proceeding from the lower vertex index to the higher one, the defect $v_2$ results in a star defect, being adjacent to defects $v_1$, $v_3$ and $v_4$. 
In this way, the resulting tree is the one having as edges: $(v_1, v_2)$, $(v_2, v_3)$, $(v_2, v_4)$, and $(v_3, v_5)$.
Indeed, during the peeling phase, this tree results in a matching consisting of: a weight-two horizontal edge between the ghost ancilla and $v_1$; a weight-two edge between $v_2$ and $v_4$; and a weight-two edge from $v_3$ to $v_5$.
In this particular configuration, the problem arises if the \texttt{buildMatch}($v_2$,$v_4$) does not share a qubit with \texttt{buildMatch}($v_3$,$v_5$) (e.g., the qubit between $v_3$ and $v_4$).
In fact, if this qubit is not shared, the weight results in  $w_1^1(\mathcal{E}_1^1) = 6$.
Moreover, the alternative solution, obtained by attaching a ghost ancilla to $v_5$ instead of $v_1$ would have weight $w = 6$.
In this case, both solutions are greater than $t+1$. Furthermore, according to metric \eqref{eq:metricCol}, the incorrect solution would be selected, causing an error in the correction process.

To avoid this, we introduce an improvement of the bubble clustering.
Specifically, each time a defect is added to a cluster, we record the distance to its adjacent defect within the tree.
Then, if the defect being processed is closer to another defect already in the tree, and both defects are adjacent to the same star defect, the second defect is detached from the star defect and attached to the defect being processed.
This modification requires additional processing, which may increase the execution time. 
However, it should be noted that, in the presence of star defects, the information about the same physical qubits must be retrieved multiple times from memory during the peeling phase.
Therefore, avoiding star defects helps recover execution time in this regard.
Referring back to the previous example, with the proposed modification, when processing defect $v_3$, it becomes possible to detach $v_4$ from $v_2$, and attach $v_3$ to $v_4$.
Moreover, when processing $v_5$ it is also possible to detach it from $v_3$ and attach it to $v_4$, obtaining the tree shown in Fig.~\ref{Fig:postProc}b.
In this way, the first matching has weight $w_1^1(\mathcal{E}_1^1) = 4$, and will be chosen as a final solution, correcting the error.
We remark that this adjustment is also able to improve the asymptotic performance (i.e., correction of weight $t+1$ error patterns). 
Let us clarify this procedure with an example. 
\begin{example}
  In Fig.~\ref{Fig:postProc}, we have four $\M{Z}$ channel errors occurred on the lattice, resulting in defects depicted as red circles. 
  Since $n_\mathrm{d} = 5$, the bubble radius $R_{\text{sph}} = 2$, resulting in a single cluster with an odd number of defects, necessitating the insertion of a single ghost ancilla. 
  a) Bubble clustering decoding without star-defects avoidance. The matching $\mathcal{E}_1^{1}$, evaluated during the peeling phase, has weight $w_1^{1} = 6 > t$.
  Therefore, a post-processing phase is required.
  The second matching has weight $w_1^{2} = 6 > t$.
  Both solutions have weight $w_1 > t + 1$.
  Therefore, the solution is selected based on \eqref{eq:metricCol}, which results in choosing the faulty error pattern.
  b) Bubble clustering decoding with star-defects avoidance. The matching $\mathcal{E}_1^{1}$, evaluated during the peeling phase, has weight $w_1^{1} = 4 > t$.
  Therefore, a post-processing phase is required.
  The second matching has weight $w_1^{2} = 6 > t$.
  Moreover, the weight of the first matching is equal to $t + 1$.
  Hence, $\mathcal{E}_1^{2}$ is chosen as final matching, and the error is actually corrected.
\end{example}

\begin{figure}[t]
 	\centering
\includegraphics[width=0.99\columnwidth]{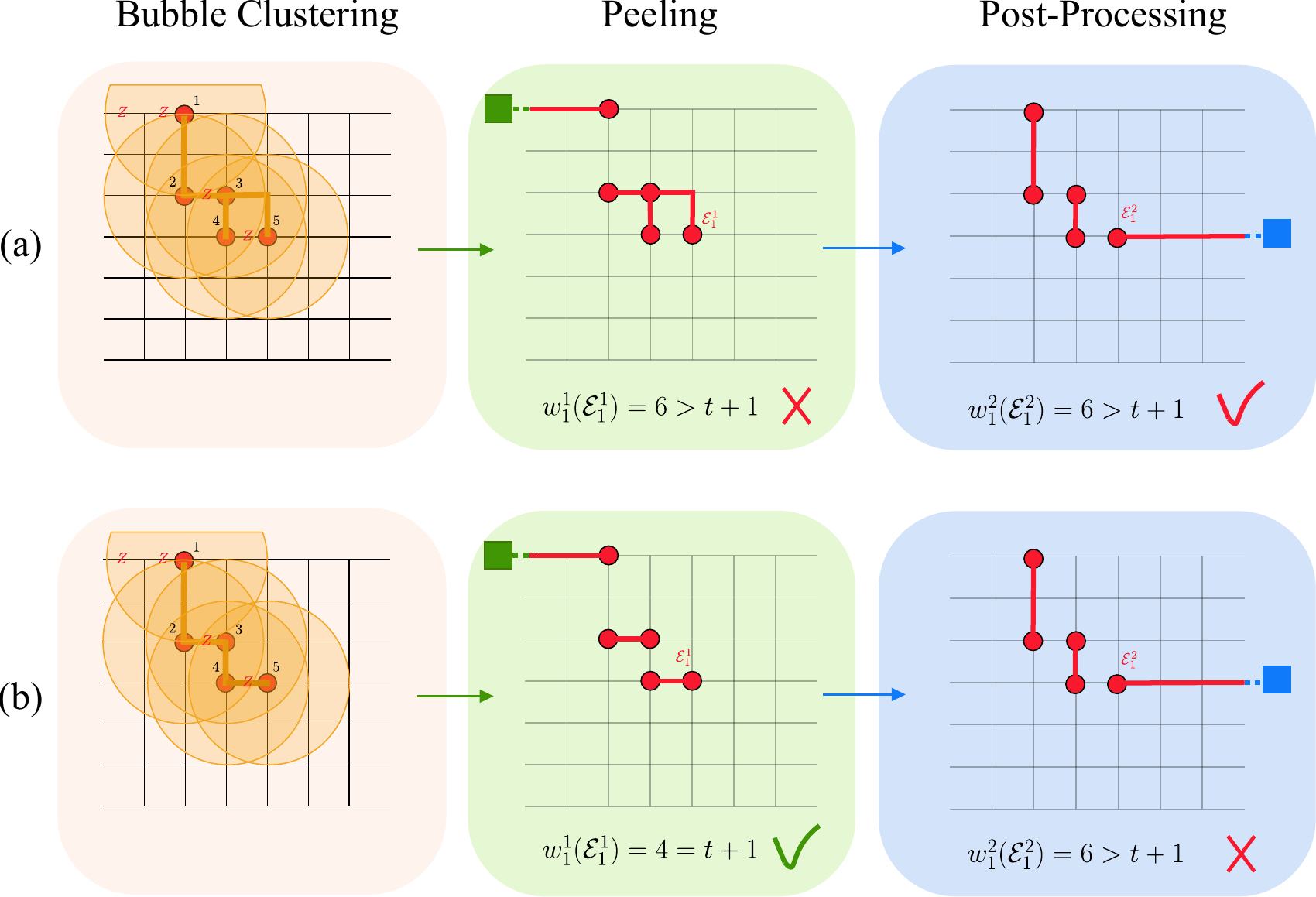}
 	\caption{
  Examples of \ac{BC} decoding for a $[[85,1,7]]$ surface code:
a) without star-defects avoidance.
b) with star-defects avoidance.}
 	\label{Fig:postProc}
\end{figure}
In Fig.~\ref{Fig:plot_adj} we show the impact of this adjustment on a $[[41,1,5]]$ surface code. 
Without star-defect avoidance, the logical error rate is represented by the yellow curve. 
However, by applying these techniques, we achieve the green curve, indicating a significant improvement. 
This gain shows that avoiding star defects enhances the correction of errors with weight $\geq t + 1$, leading to improved overall performance.
For completeness, we also include the case where the star-defect avoidance technique is employed without radius adjustments, shown by the red curve.
The radius adjustment technique yields performance improvements at high physical error rates, whereas the star-defect avoidance technique provides an advantage at low physical error rates.

Adopting the described techniques, we observe an overall improvement of the performance that increases with the code distance.
As an example, in Fig.~\ref{Fig:plot_adj}, for $p = 10^{-2}$, we observe an improvement of $90\%$ in error correction.
For $p = 10^{-2}$, the $[[41,1,5]]$ surface code yields an improvement of $42\%$, while the $[[145,1,9]]$ surface code achieves a higher improvement of $78\%$.

\subsubsection{High \rev{Code Distance}}

As the number of qubits scales with the square of the \rev{code distance}, in this error regime it is beneficial to implement additional low-complexity measures to refine the error correction capability.
Specifically, for surface codes with a \rev{code distance} of $d\geq11$, we will adopt the following measures:
\begin{itemize} 
\item During the sphere clustering phase, we record clusters that contain a single defect. 
If exactly two such clusters are found, and their defects are separated by a distance of $R_{\text{sph}}^{\prime} + 1$, we merge them into a single cluster containing both defects.
\item If a cluster contains a single defect, and its distance from the boundary equals the distance to a defect in an odd-cardinality cluster, we connect these defects in a single cluster.
\end{itemize}


\section{Bubble Clustering Decoder: Analysis} \label{sec:complexity}

\subsection{Error Correction Capability Preservation}
Before proceeding with the last technical details of the decoder, we demonstrate that the whole processing using the \ac{BC} decoder ensures error correction capability up to the \rev{code distance}.

\begin{figure}[t]
	\centering
    \includegraphics[width=0.99\columnwidth]{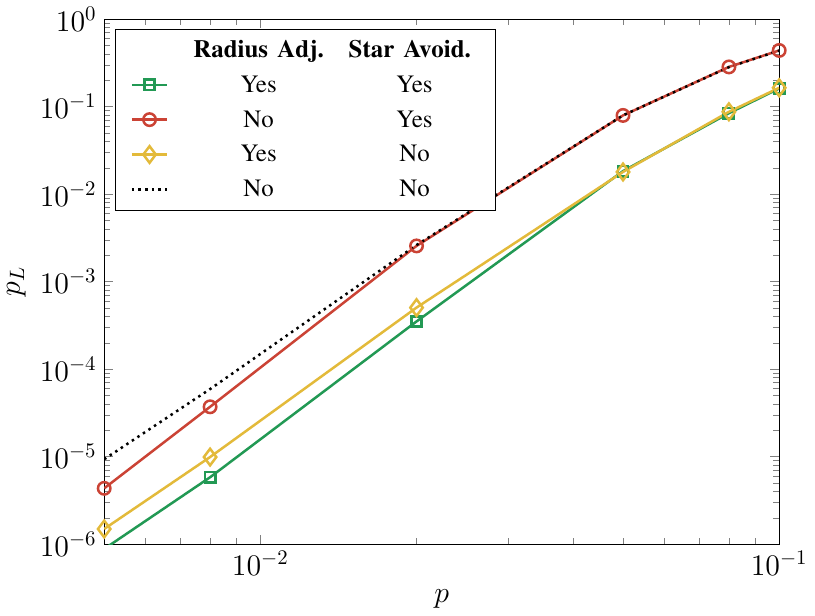}
	\caption{Logical error rate versus physical error rate of the $[[85,1,7]]$ surface code. The curves illustrate the impact of radius adjustments and star-defect avoidance techniques on the error correction capability.
		\label{Fig:plot_adj}}
\end{figure}

\begin{theorem}
\label{th:DistPreserved}
For an $[[n, k, d]]$ surface code, an error pattern with weight $w \leq t$ is always correctly corrected using the \ac{BC} decoder.
\end{theorem}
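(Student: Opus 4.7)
The plan is to trace an error pattern $\M{E}$ of weight $w \le t$ through the four phases of the \ac{BC} decoder and show that the product of $\M{E}$ with the returned correction $\hat{\M{E}}$ belongs to the stabilizer group $\mathcal{S}$. I would first decompose $\M{E}$ on the primal lattice into disjoint error chains: each chain either terminates at two defects (closed) or joins a single defect to a boundary (open), and the chain weights sum to $w$. Since $w \le t$, the number of defects satisfies $n_\mathrm{d} \le 2t$, so $R_{\text{sph}}$ from \eqref{eq:Rmax} is at least $2$, giving the clustering enough room to pair chain endpoints.

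The first technical step would be to verify that the clustering phase never separates the endpoints of a closed chain. Writing $n_\mathrm{d} = 2c + o$ with $c$ closed and $o$ open chains, a closed chain of weight $w_i$ has endpoints at Manhattan distance at most $w_i$. Using the lower bound of one on each of the remaining $c + o - 1$ chains, I obtain $w_i \le w - (c + o - 1) \le t - c - o + 1$, which is at most $R_{\text{sph}} = t + 2 - c - \lceil o/2 \rceil$ since $\lceil o/2 \rceil \le o + 1$. Therefore the two endpoints of every closed chain lie inside each other's bubble and end up in a common cluster built by Algorithm~\ref{algo:SCP}, possibly by transitivity through intermediate defects. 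The single defect of an open chain either forms a singleton cluster or is absorbed by the same transitivity, and a ghost ancilla then supplies the missing boundary endpoint in Algorithm~\ref{algo:PP}. In either case the peeling phase produces a per-cluster matching whose defect endpoints coincide with those of the true chains inside that cluster, so the operator $\M{E}\hat{\M{E}}^1$ has trivial syndrome and lies in the normalizer of $\mathcal{S}$.

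To finish, I would split on the post-processing phase. In the easy case, in which every cluster satisfies $w_i^1 \le t$ and the per-cluster matchings faithfully reproduce the true chains, the total correction weight does not exceed $w \le t$, so $\M{E}\hat{\M{E}}^1$ has weight at most $2t < d$ and is forced to be a stabilizer. The hard part will be the case in which transitive clustering merges defects from several chains into a single tree: the peeling can then route a matching through qubits outside the support of $\M{E}$, producing a per-cluster weight that exceeds the true error weight on that cluster and triggering the $w_i^1 > t$ branch. In that branch the decoder generates an alternative matching $\mathcal{E}_i^2$ differing from $\mathcal{E}_i^1$ by a logical operator on the cluster and selects between them via the column-parity metric \eqref{eq:metricCol}. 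I would close the argument by invoking the equivalence, established in \cite{ForValChi:24STM}, between matchings that minimize \eqref{eq:metricCol} and minimum-weight matchings on the defect graph, which guarantees that for $w \le t$ the selected matching never introduces a logical operator, so $\M{E}\hat{\M{E}} \in \mathcal{S}$ and the code distance is preserved.
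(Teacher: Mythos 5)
You take a genuinely different route from the paper. The paper argues directly about the two candidate matchings per cluster: it shows, via a worst-case analysis of how $t$ errors can sit on one or two rows, that whenever the incorrect matching is the lighter of the two it has weight at least $t+2$, so the sequential weight checks in the post-processing phase (``choose $\mathcal{E}_i^2$ if $w_i^2\le t$; else choose whichever has weight $t+1$; else use \eqref{eq:metricCol}'') always pick the matching equivalent to the true error. Your chain decomposition and the explicit verification that $R_{\text{sph}}$ keeps both endpoints of a closed chain inside one cluster is a nice addition: it makes explicit a fact the paper uses tacitly (that, for $w\le t$, neither candidate needs to route a closed chain through two opposite boundaries, which would break the ``differ by a logical'' dichotomy).

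However, there is a real gap in your closing step, and it is exactly where the paper's case analysis does its work. You reduce the ``hard case'' ($w_i^1>t$) entirely to invoking the equivalence from \cite{ForValChi:24STM} between \eqref{eq:metricCol}-minimizing matchings and minimum-weight matchings. But the \ac{BC} post-processing phase does not minimize \eqref{eq:metricCol} unconditionally: it only falls back to \eqref{eq:metricCol} when \emph{both} $w_i^1>t+1$ and $w_i^2>t+1$. In the intermediate regime where one candidate has weight exactly $t+1$ (for example, $w_i^1=t+1$, $w_i^2\ge t+2$), the decoder commits to the weight-$(t+1)$ matching without ever evaluating \eqref{eq:metricCol}, and the cited equivalence says nothing about whether that choice is correct. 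To close the argument you would need to show, as the paper does, that for errors of weight $\le t$ the wrong candidate always has weight at least $t+2$ whenever the right one has weight $>t$; this is the content of the two-adjacent-rows worst-case bound $2t+2-t=t+2$ in the paper's proof, and it is missing from your sketch.
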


\begin{proof}
Note that in case both $\mathcal{E}_i^{1}$ and $\mathcal{E}_i^{2}$ have weight~$>~t + 1$, \eqref{eq:metricCol} is employed, and the error is always corrected as proven in~\cite{ForValChi:24STM}. 
Regarding errors of weight $\leq t$, for which at least one matching between $\mathcal{E}_i^{1}$ and $\mathcal{E}_i^{2}$ has weight $\leq t + 1$, we want to show that these are always corrected.
To this aim, we remark that one of the two possible solutions, $\mathcal{E}_i^{1}$ and $\mathcal{E}_i^{2}$, always corrects the error, while the other always realizes a logical operator.
Indeed, the two solutions are related by the application of a logical operator since the surface codes encode $k=1$ information qubit.
Let us initially focus on a surface code with odd distance.
As a worst case, we first consider $t$ Pauli errors in the same row of the lattice, possibly causing an horizontal logical operator of weight $2t + 1$.
In case of an odd number of defects, one of the two solutions $\mathcal{E}_i^{1}$ or $\mathcal{E}_i^{2}$ has weight $w = t$ due to the horizontal disposition, also corresponding to the actual error.
The alternative solution has always weight $2t+1-t = t+1$.
Moreover, if an even number of defects is obtained, the first solution (with no ghost ancillas) always coincides with the error and has weight $t$.
Hence, in both cases, the error is always corrected.
Next, let us consider the case in which $t$ errors occur on two adjacent rows, possibly causing a logical operator of weight $2t + 2$.
In this configuration, it is possible for the correct solution not to be of minimum weight, i.e., its weight can be greater than $t$.
This happens when the final matching results in the correct solution plus an element of the stabilizer. 
However, given that the possible logical operator has a weight of $2t +2$ and the actual error has a weight of $t$, the weight of the incorrect solution will be at least $2t + 2 - t = t+2 > t + 1$. 
Hence, in the worst case, both solutions have weight greater than $t + 1$ and, using \eqref{eq:metricCol}, the error is corrected.
Moreover, if $t$ errors occur across multiple rows, the weight of the resulting logical operator would exceed $2t+2$, further reinforcing the validity of the previous argument.
The same arguments apply to a surface code with even distance, where the corresponding logical operators have a weight greater by one, i.e., the smallest logical operator has weight $2t + 2$.
\end{proof}

This theorem has important consequences on the decoding performance. 
Indeed, there is a great amount of errors of weight $t + 1$ that are not corrected by \eqref{eq:metricCol} but can result in a solution of weight $t + 1$.
Hence, they are actually corrected by the \ac{BC} decoder, according to Section~\ref{subsec:EC}. 
An instance of the \ac{BC} decoder requiring a post-processing phase is depicted in Fig.~\ref{Fig:postProc} for the $[[85,1,7]]$ surface code.

\begin{corollary}
   \label{cor:ErrCLusters}
    Any error pattern of weight $t + \ell - 1$, for which the bubble clustering procedure results in at least $\ell > 0$ distinct clusters, is always corrected.
\end{corollary}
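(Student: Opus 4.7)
The plan is to reduce Corollary~\ref{cor:ErrCLusters} to $\ell$ independent applications of Theorem~\ref{th:DistPreserved}, one per cluster, through a pigeonhole bound on the weight of the error sub-pattern inside each cluster.

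First I would verify that the bubble clustering of Section~\ref{sec:SCP} partitions the errors, and not merely their defects. A single $\M{Z}$ error flips two adjacent $\M{X}$ generators (or one generator together with the boundary), and the resulting defects lie at Manhattan distance exactly one in the generator grid. Since every value of $R_{\text{sph}}$ actually used by the decoder satisfies $R_{\text{sph}}\geq 2$ (compare~\eqref{eq:RmaxHE}), the two defects produced by any error are always grouped into the same cluster. Consequently, every error that produces at least one defect is associated unambiguously with exactly one cluster.

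Next I would carry out the pigeonhole step. Let $w_i$ denote the weight of the error sub-pattern associated with the $i$-th cluster, for $i=1,\dots,N_c$ with $N_c\geq \ell$. Every cluster contains at least one defect, which in turn requires at least one contributing error, so $w_i\geq 1$; moreover $\sum_i w_i \leq t+\ell-1$. It follows that
\begin{align}
\max_{i}\, w_i \;\leq\; (t+\ell-1) - (N_c - 1) \;\leq\; t,
\end{align}
i.e.\ every cluster carries a local error of weight at most~$t$.

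Finally I would invoke Theorem~\ref{th:DistPreserved} cluster by cluster. The tree construction, the ghost ancilla attachment, the peeling and the post-processing selection between $\mathcal{E}_i^{1}$ and $\mathcal{E}_i^{2}$ via~\eqref{eq:metricCol} all act on data internal to the $i$-th cluster, so the argument of Theorem~\ref{th:DistPreserved} applies verbatim on each cluster with local error weight $\leq t$; the union of the resulting local corrections recovers the global error up to stabilizers. The main obstacle I anticipate is the rigorous localisation of Theorem~\ref{th:DistPreserved}: one must show that the two candidate matchings of a single cluster are related by a coset of the cluster-local stabilizer group, and that the columnar metric~\eqref{eq:metricCol} evaluated cluster-wise reproduces the global decision rule. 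Both properties follow from the cluster isolation guaranteed by the choice of $R_{\text{sph}}$, but verifying them in full generality---in particular for edge cases in which cancelled error pairs outside any cluster form loops of the lattice---is the step where the most care will be required.
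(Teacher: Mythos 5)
The paper states Corollary~\ref{cor:ErrCLusters} without an accompanying proof, so there is no ``paper version'' to compare against; the question is simply whether your argument holds up. The pigeonhole step and the cluster-by-cluster application of Theorem~\ref{th:DistPreserved} are appealing, but the opening step on which they rest is wrong.

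You justify the claim that bubble clustering \emph{partitions the errors} by observing that a single $\M{Z}$ error flips two $\M{X}$ generators at Manhattan distance one, which must be grouped together because $R_{\text{sph}}\geq2$. That is true of an isolated error, but the decoder never sees isolated errors; it sees a syndrome, and whenever $m$ adjacent errors form a chain, all interior defects cancel and only the two endpoint defects survive, at distance $m$. If $m>R_{\text{sph}}$ those two defects can land in two different clusters. In that case there is no well-defined way to attribute the $m$ error qubits to a single cluster, and both parts of your pigeonhole argument fail simultaneously: ``each cluster carries at least one error'' is no longer a statement about a partition of the error set, and the weights $w_i$ need not sum to the global error weight. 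Concretely, take a horizontal chain of $t+1$ adjacent errors whose two endpoint defects are more than $R_{\text{sph}}=t+2-\lceil n_\mathrm{d}/2\rceil$ apart, together with one isolated error far away: this is an error pattern of weight $t+2=t+\ell-1$ yielding $\ell=3$ clusters (two singletons containing the chain endpoints and one containing the isolated pair), which satisfies the hypothesis of the corollary. Yet each singleton cluster is peeled by attaching a ghost ancilla to its nearest boundary, the accepted matchings each have weight $\leq t$, and the union of the original chain with the two boundary arcs can span the lattice — a logical $\M{Z}_L$. Nothing in your argument rules this out, and indeed the paper's own cluster-merging heuristics for single-defect clusters in Section~\ref{sec:PhyErr} suggest the authors are aware that exactly this configuration needs extra handling.

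So even granting step 3 (the localisation of Theorem~\ref{th:DistPreserved} to a single cluster, which also deserves more care, since the logical-operator coset and the column metric in \eqref{eq:metricCol} are global objects), step 1 is not a technical obstacle but a genuine gap: without a mechanism that forces widely separated chain endpoints into the same cluster, no purely cluster-local argument can establish the corollary. A correct proof has to bound, for a global error of weight $t+\ell-1$ and $\geq\ell$ clusters, the weight of the residual $E+C$ across clusters and show that the radius formula \eqref{eq:Rmax} prevents the boundary arcs from assembling into a spanning chain — not assume a per-cluster error decomposition that the syndrome does not supply.
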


\subsection{Complexity Analysis }

In this section we analyze the complexity of the \ac{BC} decoder. 
The first phase, i.e., the evaluation of the bubble radius, is performed with a complexity of $O(1)$.
\rev{Next, for each defect the algorithm potentially compares it to every other defect to check the distance condition.
This check is performed
during the bubble clustering phase.}
Hence, the worst-case scenario involves $O(n_\mathrm{d}^2)$ operations.
Also, before adding a defect to a cluster, the algorithm checks if it is already in the cluster.
Each check is constant time $O(1)$ because it simply involves looking up the defect's membership in an array.
Thus, the total complexity is dominated by the quadratic terms, resulting in an overall complexity of $O(n_\mathrm{d}^2)$. 
The star-defects avoidance technique involves checking whether two defects share any adjacent defects, and this check is performed during the bubble clustering phase whenever two defects are assigned to the same cluster.
This verification can be performed efficiently by accessing a pre-stored element in an array that was created during the cluster construction specifically for this purpose. 
As a result, this procedure can be completed in constant time, with a complexity of $O(1)$.
The tree peeling stage processes each cluster to iteratively remove defects and edges until a suitable matching is achieved.
Since each defect in the lattice is part of exactly one cluster, and each operation (inverting parity, removing edges) is proportional to the number of defects, the overall complexity is proportional to the total number of defects $O(n_\mathrm{d})$.
Moreover, in the post-processing phase, the weight of each cluster matching is compared to the $t$ parameter.
This comparison is a simple $O(1)$ operation per cluster. 
If there are $N_\mathrm{c}$ clusters and recalling that $N_\mathrm{c} \le n_\mathrm{d}$ the complexity becomes $O(n_\mathrm{d})$.
Finally, the vector $\V{u}_i^{j}$ can be computed simply by enumerating the horizontal qubits in the lattice row by row, from left to right.
It is sufficient to perform a single mod operation (which can be performed in $O(1)$) for each horizontal qubit in the solution.
Thus, the worst-case complexity of this step is $O(w)$, where $w$ is the weight of the error.
Note, however, that this step is required only in a small \rev{number} of cases.
Overall, the \ac{BC} decoder complexity is primarily determined by the bubble clustering stage.
Note that for an $[[n,k,d]]$ surface code, in the case of an error of weight $\leq t + 1$, the number of defects is $n_\mathrm{d} \leq 2t + 2$. Hence, the complexity of the algorithm in this regime can be written as $O(d^2)$ or $O(n)$, considering that in a surface code $n = d^2 + (d-1)^2$.
Table~\ref{tab:TCompl} presents a comparison of the complexities of different decoders.
Here, $n$ represents the number of qubits, $n_\mathrm{d}$ denotes the number of defects, and $\alpha$ is the inverse of Ackermann's function\cite{Del:21}.

\begin{table*}[t]
    \centering
    \caption{Decoder Complexities.}
    \label{tab:TCompl}
    \begin{tabular}{l c c c c c}
    \toprule
         & MWPM~\cite{Hig:23} & UF~\cite{Del:21} & BC & STM~\cite{ForValChi:24STM}& RFire~\cite{ForValChi:24STM} \\
    \midrule
        complexity &
         $O(n_\mathrm{d}^3 \, \log (n_\mathrm{d}))$ & $O(n\alpha(n))$ & $O(n_\mathrm{d}^2)$ & $O(n_\mathrm{d}^2)\log(n_\mathrm{d})$ & $O(n_\mathrm{d}^2)$ \\
    \bottomrule
    \end{tabular}
\end{table*}

\section{Numerical Results}\label{sec:NumRes}

In this section we compare the performance of surface codes using \ac{BC}, \ac{MWPM}, \ac{UF}, and \ac{RFire} decoding via Monte Carlo simulations. 
We included \ac{RFire} in the comparison because it offers faster execution times than \ac{STM} while providing similar error correction capability.
All these decoders are implemented in C++, run with an Apple Silicon M2 processor and executed on a single core. 
The \ac{UF} algorithm is a C++ implementation of the efficient weighted union find described in \cite{Del:21}.
In the standard implementation, we exploit the \acs{LEMON} C++ library for an efficient \ac{MWPM} algorithm \cite{DezBalJut:11}.
Also, we employ the PyMatching 2 library for the sparse blossom implementation~\cite{HigGid:23}.
Additionally, when using PyMatching, we submit shots in batches of 1000 to minimize the overhead of Python-to-C++ method calls.
Also, for an efficient implementation of \acs{LEMON} version of the \ac{MWPM}, and \ac{RFire} decoders, we use Manhattan distance to assign weights and construct the graph of defects, avoiding the usage of Dijkstra's algorithm. 
Regarding the \ac{BC} decoder, the numerical evaluation is performed by including all adjustments discussed in Sec.~\ref{sec:PhyErr}. 
\rev{In evaluating the logical error rates, for each simulation point, we have run simulations until observing at least one hundred errors.}

\begin{figure}[t]
	\centering
    \includegraphics[width=0.99\columnwidth]{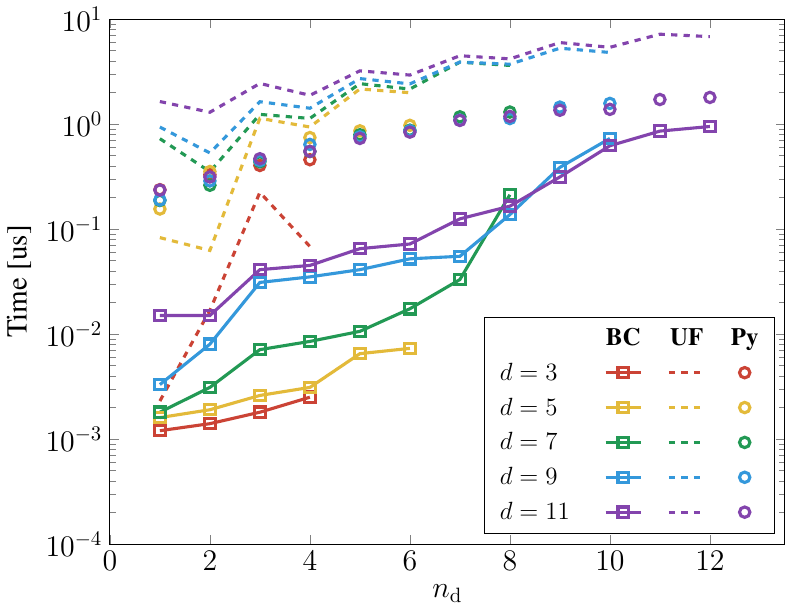}
	\caption{ Average execution times per single decoding versus number of defects. The abbreviation Py stands for PyMatching \rev{\cite{HigGid:23}} version of the \ac{MWPM}. 
		\label{Fig:plot_time}}
\end{figure}

\subsubsection{Average Execution Times}
To assess the complexity of the decoders, we measure the average execution time of the complete decoding procedure: from the syndrome measurement to the final solution.
Fig.~\ref{Fig:plot_time} and Fig.~\ref{Fig:plot_time2} show the average execution times per single decoding over the number of defects in the lattice.
The evaluation is carried out when varying the number of defects and the lattice size, from the $[[13,1,3]]$ to the $[[221,1,11]]$ surface codes.
Specifically, the decoders are provided with batches of 1000 instances, each containing $n_\mathrm{d}$ defects.
From Fig.~\ref{Fig:plot_time}, it is evident that the \ac{BC} decoder outperforms both PyMatching and \ac{UF} algorithms in terms of execution speed across all code instances.
From Fig.~\ref{Fig:plot_time2}, we observe that the \ac{BC} decoder achieves a time savings of over an order of magnitude compared to the \ac{RFire} decoder for code distances greater than three.
Moreover, note that the \ac{RFire} decoder achieves a speedup of over three orders of magnitude compared to the \acs{LEMON} implementation of the \ac{MWPM}, as detailed in~\cite{ForValChi:24STM}.
However, in the present analysis, execution times are slightly longer compared to~\cite{ForValChi:24STM} because we include the graph generation procedure as part of the overall decoding process for all decoders.
Moreover, Fig.~\ref{Fig:plot_time3} presents a comparison of the average execution times of the \ac{BC}, \ac{UF}, and Pymatching algorithms for \rev{high code distances,} i.e., surface codes up to the $[[685, 1, 19]]$. 
The results indicate that the performance advantage of the \ac{BC} decoder remains consistent as the code distance increases.

\begin{figure}[t]
	\centering
    \includegraphics[width=0.99\columnwidth]{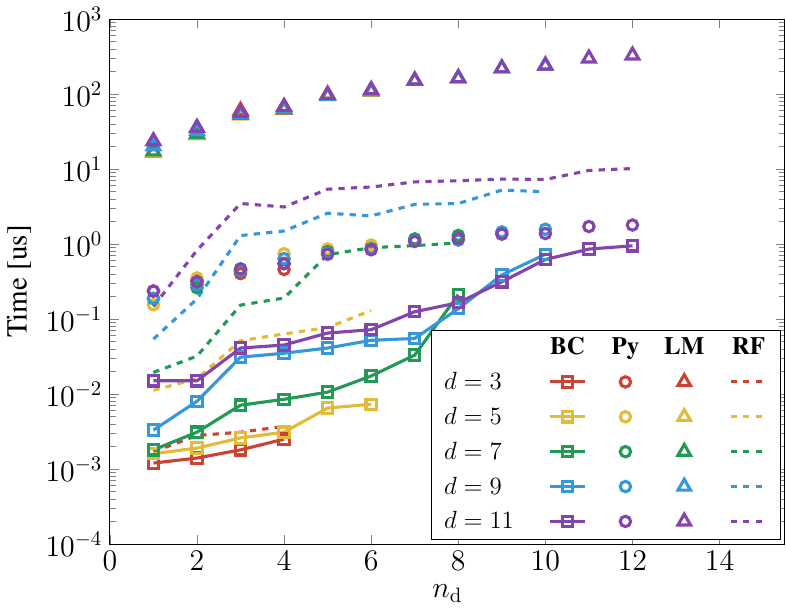}
	\caption{ Average execution times per single decoding versus number of defects. The abbreviations Py and LM stand for PyMatching~\rev{\cite{HigGid:23}} and LEMON versions of the \ac{MWPM}.
		\label{Fig:plot_time2}}
\end{figure}
\begin{figure}[t]
	\centering
    \includegraphics[width=0.99\columnwidth]{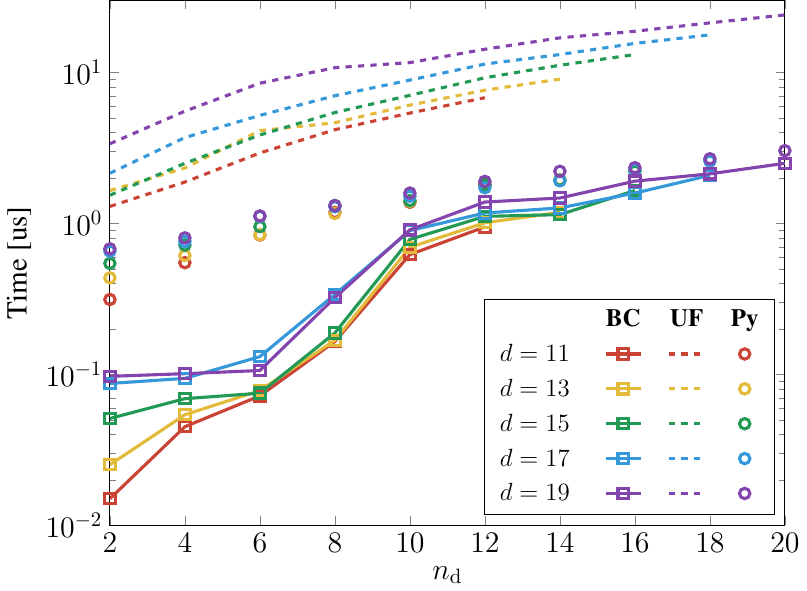}
	\caption{ \rev{Trend of the average execution times \rev{for high code distances}. The abbreviation Py stands for PyMatching~\rev{\cite{HigGid:23}} version of the \ac{MWPM}.}}
		\label{Fig:plot_time3}
\end{figure}

\subsubsection{Logical Error Rate}
In Fig.~\ref{Fig:plot_perf} and Fig.~\ref{Fig:plot_perf2}, we show the logical error rate as a function of the physical error rate of some surface codes over depolarizing channel.
We can observe that, employing the clusterization procedure, we have a large gain in error correction capability for the \ac{BC} decoder when compared to the \ac{RFire}. 
In particular, the proposed decoder is able to correct a much greater fraction of error patterns of weight $\geq t + 2$.
\rev{From these results we observe that, for shorter code distances, the performance of the \ac{BC} and the \ac{MWPM} decoders are quite comparable.}
In Fig.~\ref{Fig:plot_perf2} we observe that the performance gap between the \ac{MWPM} and \ac{BC} decoders widens when comparing distances $d = 7$ and $d = 9$, but shows only a slight increase between $d = 9$ and $d = 11$.

\begin{figure}[t]
	\centering
    \includegraphics[width=0.99\columnwidth]{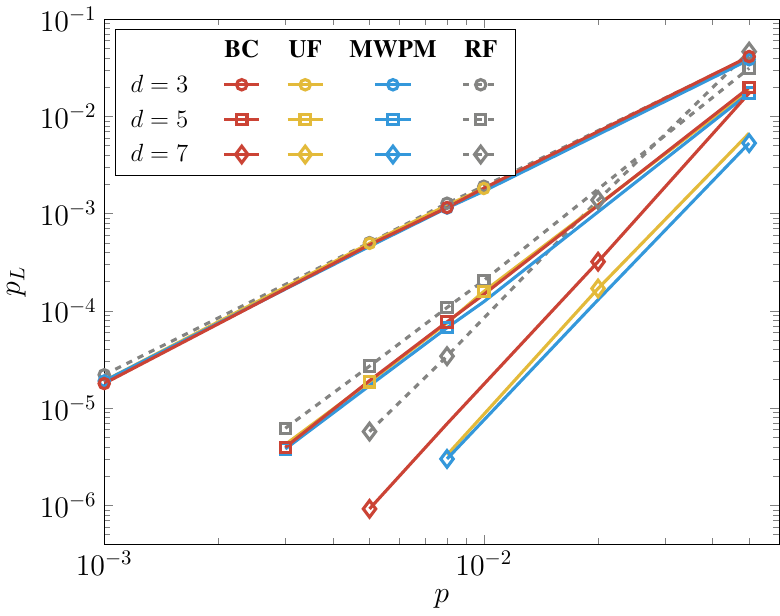}
	\caption{ Logical error probability versus physical error probability of the channel  over depolarizing channel.
		\label{Fig:plot_perf}}
\end{figure}
\begin{figure}[t]
	\centering
    \includegraphics[width=0.99\columnwidth]{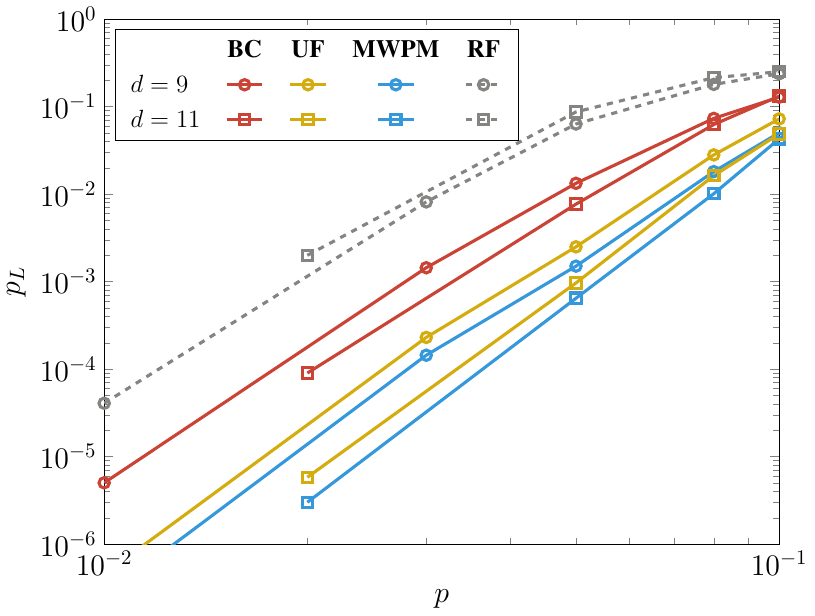}
	\caption{ Logical error probability versus physical error probability of the channel  over depolarizing channel.}
		\label{Fig:plot_perf2}
\end{figure}
%


\section{Conclusions}\label{sec:conclusions}

In this work, we introduce a novel decoder for quantum surface codes that prioritizes speed, \rev{accepting a trade-off in error-correction performance}. After demonstrating its ability to correct all error patterns with weight $\leq t$, we then shift our focus to more complex, higher-weight error patterns. Our decoder consistently achieves execution times in the sub-microsecond range for practical surface code lattice sizes. Notably, its complexity scales linearly with the number of qubits, provided that the number of errors remains within the decoder's design limits. We benchmark our proposal against several established quantum decoders, including the \acs{LEMON} and PyMatching implementations of the \ac{MWPM}, the \ac{UF}, and the \ac{RFire}.



\bibliographystyle{IEEEtran}
\bibliography{Files/IEEEabrv,Files/StringDefinitions,Files/StringDefinitions2,Files/refs}

\end{document}